\def\bequ{\begin{equation}}
\def\eequ{\end{equation}}
\def\barr{\begin{array}}
\def\earr{\end{array}}
\def\ben{\begin{equation}}
\def\een{\end{equation}}
\def\bena{\begin{eqnarray}}
\def\eena{\end{eqnarray}}
\def\b1{e^0}
\newcommand{\be}{\begin{equation}}
\newcommand{\ee}{\end{equation}}
\def\bea{\begin{eqnarray}}
\def\eea{\end{eqnarray}}
\def\Tr{\mathrm{Tr}}
\newtheorem{theorem}{Theorem}[section]
\newtheorem{lemma}[theorem]{Lemma}
\newtheorem{prop}[theorem]{Proposition}
\newtheorem{defin}[theorem]{Definition}
\def\be{\begin{equation}}
\def\ee{\end{equation}}
\def\bea{\begin{eqnarray}}
\def\eea{\end{eqnarray}}
\def\lesssim{\mathrel{\hbox{\rlap{\hbox{\lower4pt\hbox{$\sim$}}}\hbox{$<$}}}}
\def\gtrsim{\mathrel{\hbox{\rlap{\hbox{\lower4pt\hbox{$\sim$}}}\hbox{$>$}}}}
\begin{document}
\hfuzz=100pt
\title{Non-existence of Skyrmion--Skyrmion and 
Skyrmion--anti-Skyrmion static equilibria}
\author{G. W. Gibbons$^1$, C.M. Warnick$^{1, 3}$ and W. W. Wong$^2$ 
\\
\\ \small{1. D.A.M.T.P., University of Cambridge, Wilberforce Road, Cambridge CB3 0WA,
  U.K.} 
\\ \small{2. D.P.M.M.S., University of Cambridge, Wilberforce Road,
  Cambridge CB3 0WB, U.K.} 
\\ \small{3. Queens' College, Cambridge, CB3 9ET, U.K.}}

\date{May 14, 2010}       
 \maketitle  \let\thefootnote\relax\footnotetext{{\em Email}:
  g.w.gibbons@damtp.cam.ac.uk, c.m.warnick@damtp.cam.ac.uk,
  w.wong@dpmms.cam.ac.uk \\ \mbox{} \hspace{.4cm} {\em DAMTP pre-print no.:} DAMTP-2010-40}

\begin{abstract}
We consider classical static Skyrmion--anti-Skyrmion
and Skyrmion--Skyrmion configurations, symmetric
with respect to a reflection plane,
or symmetric up to a $G$-parity transformation respectively.
We show that the stress tensor component completely normal to the
reflection plane, and hence its integral over the plane,
is negative definite or positive definite respectively.
Classical Skyrmions always repel classical
Skyrmions and classical Skyrmions
always attract classical anti-Skyrmions and thus no
static equilibrium, whether stable or unstable,
is possible in either case. No other symmetry assumption
is made and so our results also apply to multi-Skyrmion configurations.
Our results are consistent
with existing analyses of Skyrmion forces at large separation,
and with numerical results on Skymion--anti-Skyrmion
configurations in the literature which  admit a different
reflection symmetry. They also hold for the massive Skyrme model.
We also point out that reflection symmetric
self-gravitating Skyrmions or black holes with Skyrmion hair
cannot rest in symmetric equilibrium with  self-gravitating
anti-Skyrmions.
\end{abstract}

\section{Introduction}

Most of our present ideas about the interactions of  matter and 
anti-matter are quantum mechanical and  
go back to Dirac's discovery of the equation named after him \cite{Dirac1} and 
his subsequent development of hole theory to interpret
its  negative energy solutions \cite{Dirac2, Dirac3}. However the concept
of matter-antimatter symmetry had been  anticipated in the nineteenth century
\cite{Schuster} and was manifest in the earlier classical  attempts 
to explain  electrical and gravitational forces in terms
of   bodies, pulsating in-phase or out-of-phase   in the 
aether \cite{Forbes, Pearson1, Pearson2},
or  as  sources or sinks  of currents in the aether \cite{Pearson3}    
possibly arriving from, or departing to,  extra dimensions \cite{Rouse}.
The former idea resembles the properties of non-topological solitons,
Q-matter and boson stars \cite{Lee, Friedberg, ColemanQ}. The latter idea  is a curious anticipation of current models
in string theory in which the ends of strings attached to  3-branes
appear  as positively and negatively charged particles 
described by the Born-Infeld action
\cite{BornInfeld,Gibbons1}.

Common to both classical and quantum mechanical
models is the idea that (charged)  particles   repel  particles
while  particles attract anti-particles. The many successes of quantum 
field theory in treating particle--anti-particle interactions
did not mean however that classical models of matter were  abandoned.
Dirac himself introduced  magnetic monopoles and anti-monopoles \cite{Dirac3}
and 't Hooft and Polyakov \cite{Hooft, Polyakov} showed that
Yang-Mills-Higgs theory admits  
smooth classical solutions with finite energy. Earlier Born and Infeld
\cite{BornInfeld}, motivated by the self-energy problem in electrodynamics,
had resuscitated  a particular example of a class of 
non-linear electrodynamical
theories of Mie \cite{Mie} and showed that the theory admits
finite energy, albeit non-smooth particle and anti-particle like solutions
called BIons.
This whole  line of development can be said to have culminated in 
Skyrme's introduction of the model that bears his name \cite{Skyrme1,Skyrme2}
in which smooth classical finite energy solutions, called Skrymions
or anti-Skyrmions,
correspond to baryons and anti-baryons.  For a review of the
entire field of the role of  classical solutions in quantum field theory
including Skyrmions see \cite{MantonSutcliffe}.  
For a very recent account of all aspects of Skyrmions as they
are currently  used in Nuclear Physics see \cite{BrownRho}.

A common feature of all of the classical models described above
is the existence of both  particle and anti-particle like solutions,
generically known as solitons, 
such that soliton number minus anti-soliton number is conserved
but neither soliton number nor anti-soliton number is separately conserved
since solitons and anti-solitons may annihilate. 
In the case of topological solitons, annihilation is possible 
because a soliton--anti-soliton configuration is topologically trivial.
 
Loosely speaking, the notion that particles obey the ``likes repel,
opposites attract'' rule manifests in these classical models as
repulsive and attractive forces between soliton--soliton and
soliton--anti-soliton configurations respectively, and one does not
expect such configurations can remain in stable equilibrium. However 
the concept of force between particles, except in the limit
of large separation, is not well-defined  for solitons, because
the notion of a soliton position is itself not well-defined.
Since we are dealing with a continuum theory it is better
to think of the stress tensor $T_{ij}=T_{ji}$
in terms of which the total force  acting  on a surface $S$
is
\ben F_i = \int _S T_{ij} d s _j \,.\een 
If a soliton, or anti-soliton, is effectively enclosed
within a domain $D$, we take $S$ as the boundary  $ S= \partial D$.
It is difficult to say much about the force
in the general time-dependent case, but  if the system is stationary
we have
\ben
\partial _i T_{ij}=0\,.
\een
It follows by the divergence theorem that for any domain 
$D$ the total force must vanish
\ben
\int _{\partial D} T_{ij} d s_j =0\,.
\een 
A convenient special case is when $D$ is a half space
given by $z \ge  0$,
and  the stress tensor falls off sufficiently rapidly at infinity
so that the integral over the  large hemisphere 
$\sqrt{x^2+y^2 +z^2} = r = \mbox{constant}$, $z\ge 0 $
 vanishes  as 
 $r \rightarrow \infty$. It follows that for an equilibrium to be possible one must have
\ben
\int _{z=0} T_{iz}\, dx dy =0\,.
\een
This criterion was used in section 4.1 of \cite{Gibbons1}
to calculate the force between BIons and between
BIons and anti-BIons, and hence to  rule out the existence of certain 
static BIon-BIon and BIon--anti-BIon configurations.
The same idea was adapted to incorporate gravity and black holes
in \cite{BeigGibbonsSchoen}.
The purpose of the present paper is to extend the discussion to
the case of Skyrmions in flat spacetime.

An important ingredient of the arguments in \cite{Gibbons1,BeigGibbonsSchoen}
is the assumption that the configurations whose existence
one  wishes to exclude admit a reflection symmetry $r_P$ 
which fixes a plane $P$  and which acts on the space of fields
as a suitable particle number reversing involution. 
The use of  a reflection map was  
also used in \cite{Coleman} to show that the energy of
a reflection symmetric  monopole anti-monopole 
configuration may always be reduced by moving them closer together.
The argument was extended to quantum-mechanical Casimir
forces \cite{Kenneth} and a rationale later provided
in terms of the quantum field theoretic concept of Reflection
Positivity \cite{Bachas}.  This strongly suggests that the use
of a matter anti-matter reflection plane is not merely
a technical device, but has a deeper significance. We shall comment further
on this point in the conclusion.

\section{Reflection Symmetry and Lagrangian Field Theory } 

\subsection{Mathematical Preliminaries}

We first describe the mathematical conventions. Throughout we consider
a Lagrangian field theory for maps $\Phi$ from Minkowski space
$(\mathbb{R}^{1,d},\eta)$ into a Riemannian manifold
$(\mathcal{N},h)$ (we state our main mathematical result for arbitrary
spatial dimensions, though in subsequent application we will focus on
the $d=3$ case). For concreteness we take the signature convention $(-+\cdots+)$ for
the Minkowski metric, though the mathematical statements in this paper
are independent of the convention chosen. Let the action be defined by some generally
covariant Lagrangian density $\mathcal{L}$. $\Phi$, being formally a
stationary point of this action, solves an associated Euler-Lagrange
equation. The Einstein-Hilbert stress-energy tensor (hereon we refer
to as the stress tensor) is defined by the variation of the Lagrangian
density relative to the inverse metric
\be\label{eq:stresstensor}
T_{\alpha\beta} = \left . \frac{-2}{\sqrt{|g|}} \frac{\delta
\mathcal{L}}{\delta g^{\alpha\beta}} \right |_{g = \eta}
\ee
where $\alpha,\beta$ are space-time indices. That $\Phi$ solves the
Euler-Lagrange equation is captured in the statement that the stress
tensor is divergence free
\be
\eta^{\alpha\beta}\nabla_\alpha T_{\beta\gamma} = 0\,.
\ee
For an arbitrary vector field $X^\alpha$, we can construct the energy
current ${}^{(X)}J_\beta = T_{\alpha\beta}X^\alpha$. A simple
computation shows that
\be
2\eta^{\alpha\beta}\nabla_\alpha {}^{(X)}J_\beta =
T_{\alpha\beta}{}^{(X)}\pi^{\alpha\beta}
\ee
where the covariant \emph{deformation tensor relative to} $X^\alpha$ is 
 ${}^{(X)}\pi_{\alpha\beta} = (L_X\eta)_{\alpha\beta}$ given by the
Lie-derivative of the metric tensor by the vector field $X^\alpha$. In
particular, we have Noether's theorem, which states that continuous
symmetries of the space-time generate conserved currents. 

We are interested in static solutions, for which $\partial_0\Phi = 0$.
In this case the stress tensor is independent of time, and on a
constant-time slice $\cong (\mathbb{R}^d,e)$ is divergence free
\be
\partial_i T_{ij} = 0
\ee
(Roman indices are spatial only). On the spatial slice, the coordinate
vector fields $\partial_1,\ldots,\partial_d$ generate translation
symmetries. Then by the divergence theorem, 
\be
\int_{\partial D} {}^{(X)}J_i ds_i = 0
\ee
for any translation vector field $X^i$ with the integral evaluated on
the boundary of some domain $D\subset\mathbb{R}^d$. If we pick $D$ to be a
particular half-space of $\mathbb{R}^d$, and $X^i$ the translation
field normal to its boundary (which is a $d-1$ dimensional
hyperplane), we conclude that
\be
\int_P T(n,n) ds = 0
\ee
where $P$ is a hyperplane, $ds$ the induced surface measure, and $n$
a unit normal to $P$, if we assume that its corresponding current
decays sufficiently fast at infinity. (In particular, it suffices that
the stress tensor decays as $|T_{ij}(x)| \lesssim (1 +
|x|)^{(1-d-\epsilon)}$.)

In the following we will also make the assumption that the system is
isolated. The criterion that we will impose is then that
$\Phi$ extends continuously to a map from $\mathbb{S}^d\to
\mathcal{N}$ under one-point compactification of the spatial slice.
For a $C^1$ solution, a sufficient condition to guarantee the
criterion is that $|\partial\Phi| \lesssim (1+|x|)^{-1-\epsilon}$.

In the case of the harmonic map system or the Skyrme system, the two
decay assumptions mentioned above are less restrictive than finite
total energy.

\subsection{Even and odd reflections}

Given a hyperplane $P\subset\mathbb{R}^d$, there is a natural action
of $\mathbb{Z}_2$ by reflection which we will denote by $r_P$. In this
section we consider solution maps to the Lagrangian field theory outlined
above that are equivariant under reflection. We shall assume that
there is a natural $\mathbb{Z}_2$ action on $\mathcal{N}$ \emph{whose
fixed points are discrete}, and we denote the action by the negative
sign $-$. In the case where $(\mathcal{N},h)$ is a symmetric space, we can
fix some point $q\in \mathcal{N}$ and let $-$ be the geodesic symmetry
map about the point $q$. In the case where $(\mathcal{N},h)$ is a Lie
group with a bi-invariant metric, we take $-$ to be the group
inverse. For the case of the Skyrme model, which we shall consider
below, $-$ corresponds to what is known to physicists as $G$-parity \cite{Michel}.

\begin{defin}
Fixing a hyperplane $P$, a solution $\Phi$ to the static
Euler-Lagrange equation is said to be \emph{even}
across $P$ if $\Phi(r_Px) = \Phi(x)$, and \emph{odd} if $\Phi(r_Px) = -\Phi(x)$. 
\end{defin}
Under the decay assumptions assumed, a continuous odd solution $\Phi$
will take a value at some fixed-point of the $\mathbb{Z}_2$ action on
$\mathcal{N}$,  $q = -q$,  along the hyperplane $P$, 
and hence take the same value $q$ at infinity. 

The content of this letter is in the following convexity conditions
\begin{defin}
Fix a point $x$ and a unit tangent vector $W$ at $x$. Let $Q$ be the
orthogonal complement of $\{W\}$ in the tangent space at $x$. A 
(static) Lagrangian field theory
is said to be \emph{even-semidefinite} with respect to $Q$ 
if there exists $s_e\in \{-1,+1\}$ such that for any solution $\Phi$
satisfying $\nabla_W\Phi(x) = 0$, the corresponding stress tensor
obeys $s_eT(W,W)|_x \geq 0$. The theory is said to be
\emph{odd-semidefinite} with respect to $Q$ if there exists 
$s_o\in \{-1,+1\}$ such that for 
any solution satisfying $\nabla_Y\Phi(x) = 0~ \forall Y\in Q$, the
stress tensor obeys $s_oT(W,W)|_x \geq 0$. We replace ``semidefinite''
by ``definite'' if equality is satisfied only when $\nabla\Phi(x)
\equiv 0$.  
\end{defin}
The natural symmetries of Euclidean space guarantees that if a theory
is even/odd-(semi)definite with respect to some $Q$, it will be so
with respect to any $Q$. This condition is intimately tied to the dominant energy condition for
the associated dynamical theory. In the case where $s_es_o = -1$, a
Wick rotation argument suggests that the dominant energy
condition follows from even- and odd-definiteness if the contribution 
from the cross terms to the energy is negligible (by cross terms we 
mean terms of the
form $\nabla_W\Phi\cdot\nabla_Y\Phi$ where $Y\in Q$).

A simple example of a theory that is both even- and odd-definite is 
the harmonic map system. The stress tensor is given by
\[ T^{\mathrm{harm}}_{ij} = \nabla_i\Phi^A\nabla_j\Phi_A - \frac12
e_{ij} \nabla_k\Phi^A\nabla^k\Phi_A \]
so taking an orthonormal frame $\{f_{(i)}\}$ at the point $x$, we have
\[ T^{\mathrm{harm}}_{(1)(1)} = \frac12
\nabla_{(1)}\Phi^A\nabla_{(1)}\Phi_A - \frac12 \sum_{2\leq k \leq d}
\nabla_{(k)}\Phi^A\nabla_{(k)}\Phi_A \]
and the condition is satisfied with $s_e = -1$ and $s_o = +1$. As we
shall see, the static Skyrme system is also even- and odd-definite. 

The main lemma is the following
\begin{lemma}\label{mainlemma}
Let $P$ be a hyperplane and $\Phi$ be a classical even(odd) solution to an 
even(odd)-definite static field theory that satisfies the decay
conditions given above, and that $\Phi$ converges to some point $q\in
\mathcal{N}$ at infinity. Then $\Phi|_P = q$ and $\nabla\Phi |_P = 0$. 
\end{lemma}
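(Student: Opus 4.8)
The plan is to play the integral identity $\int_P T(n,n)\,ds = 0$ --- which holds for any static solution with the stated decay --- against the pointwise sign information furnished by even/odd-definiteness. Normalise coordinates so that $P = \{x^d = 0\}$ and $n = \partial_d$. I would first recall why the integral identity holds: applying the divergence theorem to the conserved current ${}^{(X)}J_i = T_{id}$ over the half-space $\{x^d \geq 0\}$, the boundary splits into $P$ and a hemisphere of radius $r$; the decay hypothesis $|T_{ij}(x)| \lesssim (1+|x|)^{1-d-\epsilon}$ makes the hemisphere contribution $O(r^{-\epsilon})$ and makes $T(n,n)$ integrable over $P$, so $\int_P T(n,n)\,ds = 0$.

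Next I would read off the derivative conditions that the reflection symmetry imposes along $P$. In the even case $\Phi(r_Px) = \Phi(x)$, for $x \in P$ the map $t \mapsto \Phi(x + tn)$ is an even $C^1$ function of $t$, hence $\nabla_n\Phi(x) = 0$ for all $x \in P$; this is precisely the hypothesis needed to invoke even-definiteness with $W = n$. In the odd case $\Phi(r_Px) = -\Phi(x)$, each $x \in P$ obeys $\Phi(x) = -\Phi(x)$, so $\Phi(x)$ lies in the fixed-point set of the $\mathbb{Z}_2$-action on $\mathcal{N}$, which is discrete; continuity of $\Phi$ and connectedness of $P$ then force $\Phi|_P$ to be a single fixed point, which must be $q$ since $P$ is unbounded. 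In particular all derivatives of $\Phi$ tangent to $P$ vanish on $P$, which is the hypothesis needed to invoke odd-definiteness.

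Writing $s$ for $s_e$ (even case) or $s_o$ (odd case), (semi)definiteness now gives $s\,T(n,n)|_x \geq 0$ for every $x \in P$. Since $s\,T(n,n)$ is continuous, nonnegative, and integrates to zero over $P$, it vanishes identically there. The \emph{definiteness} hypothesis then upgrades $T(n,n)|_x = 0$ to $\nabla\Phi(x) \equiv 0$ at each $x \in P$, giving $\nabla\Phi|_P = 0$. The vanishing of the tangential part makes $\Phi|_P$ locally constant, hence constant on the connected hyperplane $P$; invoking the continuous extension of $\Phi$ to $\mathbb{S}^d$, where the point at infinity lies in $\overline P$ and carries the value $q$, this constant is $q$. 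This yields both conclusions $\Phi|_P = q$ and $\nabla\Phi|_P = 0$.

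Since every ingredient is already in place, there is no single deep obstacle; the steps demanding care are the geometric bookkeeping for $r_P$ (that it fixes $P$ pointwise and reverses $n$, so that evenness kills the normal derivative while oddness drives $\Phi|_P$ into the discrete fixed-point set) and the passage from the integral identity to pointwise vanishing, which genuinely uses continuity of the stress tensor for a classical solution together with the strict (definite) rather than merely semidefinite form of the convexity hypothesis.
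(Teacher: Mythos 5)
Your proposal is correct and follows essentially the same route as the paper: the reflection symmetry supplies exactly the derivative hypothesis needed to invoke even/odd-definiteness at points of $P$, the sign of $T(n,n)$ together with the vanishing integral forces pointwise vanishing, and definiteness then kills all of $\nabla\Phi$ on $P$, after which $\Phi|_P = q$ follows from the boundary value at infinity. The extra details you supply (the discreteness of the fixed-point set in the odd case, the hemisphere estimate) are exactly the observations the paper makes in the surrounding text.
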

\begin{proof}
Observe that if $\Phi$ is an even solution, then its normal derivative
to $P$ is 0. If $\Phi$ is an odd solution, than its derivatives
tangential to $P$ vanish. So if the field theory is even/odd-definite,
$T(n,n)$ is signed for $n$ the unit normal vector to $P$, unless
$\nabla\Phi \equiv 0$. By our decay
condition we guaranteed that $\int_P T(n,n) ds = 0$, so $T(n,n)$ must
vanish point-wise along $P$, and thus all first derivatives of $\Phi$ 
(normal and tangential to $P$) vanish there. Integrating back from 
infinity we have $\Phi|_P = q$. 
\end{proof}

\subsection{Non-existence of even/odd solutions of Skyrme system}

That the dynamical Skyrme model satisfies the dominant energy condition has been
established in \cite{Gibbons2} (see also \cite{Wong}); an analogous
computation also shows that the static Skyrme system is both even- and
odd-definite, with constants $s_e = -1$ and $s_o = 1$. We start by
reviewing the static Skyrme model. 

In this model, the source manifold is $\mathbb{R}^3$ with standard
Euclidean metric $e$, and the target is $\mathcal{N} = SU(2)$ with the
bi-invariant metric associated to the Killing form. Recall that
$SU(2)$ is the Lie group of unitary $2\times 2$ complex matrices with
unit determinant, its Lie algebra $\mathfrak{su}(2)$ is the collection
of $2\times 2$ traceless anti-Hermitian matrices. The Killing form
$\mathfrak{su}(2)\times\mathfrak{su}(2)\to \mathbb{R}$ is negative
definite and given by $\Tr(VW)$ where $V,W\in\mathfrak{su}(2)$
multiply by standard matrix multiplication. The standard
bi-invariant metric $h$ is customarily normalized to be equal to
negative one-half of the Killing form at identity, such that $i$ times
the Pauli matrices form an orthonormal basis. 

Since the metric is bi-invariant, it is convenient to express the
Lagrangian density in terms of the $\mathfrak{su}(2)$ valued current
$L_i = U^\dagger \partial_iU$. The energy/action functional $E$ is
then
\be
E = - \int d^3 x \sqrt{|e|} \left( \frac{1}{2} \Tr\left(L_i L^i\right) +
\frac{1}{16} \Tr
\left([L_i, L_j][L^i, L^j] \right)\right) \label{energy}
\ee
where $L^i = e^{ij}L_j$. The Euler-Lagrange equation for the energy
minimizer can be written in divergence form as
\be\label{eq:ELE}
\nabla_i\left(L^i - \frac14 [L_j,[L^j,L^i]]\right)=0.
\ee
The energy and hence the field equations are manifestly invariant
under $SU(2)\times SU(2)$ acting on the Skyrme field $U$ by
\be\label{eq:SU2action}
U(x) \to A_1U(x)A_2\,, \qquad A_1,A_2\in SU(2)\,.
\ee
Therefore, without loss of generality, we can assume that the limit at
infinity of our Skyrme field is the Identity element, and break the
$SU(2)\times SU(2)$ symmetry to just $SU(2)$, acting as in
\eqref{eq:SU2action} but with $A_1 = A_2^\dagger$. 

This boundary
condition implies a one-point compactification of $\mathbb{R}^3$ to
$S^3$ so that topologically $U:S^3 \to S^3$. Thus there is a
topological charge given by the degree of this map, known as the
baryon number, which may be expressed as an integral over the baryon
number density given by the pullback of the volume form on $SU(2)$ to $\mathbb{R}^3$:
\begin{equation}
B = -\frac{1}{24 \pi^2} \int d^3 x \sqrt{|e|} \epsilon^{ijk}
\Tr(L_iL_jL_j)\,.
\end{equation}
Under a spatial reflection, the baryon number density changes sign since
the orientation of the spatial manifold is reversed. Under the
combination of a spatial reflection and a $G$-parity transformation
the baryon number density is preserved, since there is an
additional change of orientation on the target manifold. Thus an even
field configuration with positive baryon number on one side of the
plane of symmetry will have negative baryon number on the other side
(loosely a Skyrmion--anti-Skyrmion configuration). Conversely an
odd field configuration with positive baryon number on one side of the
plane of symmetry will also have positive baryon number on the other
side (loosely a Skyrmion-Skyrmion configuration). For more general
models whose target space is not $SU(2)$ this interpretation will
depend on the nature of the $\mathbb{Z}_2$ action on the target space and the
topological charge.

We can directly compute the stress tensor via \eqref{eq:stresstensor}.
With the notation $L_{ij} = -\Tr(L_iL_j)$, 
\be\label{eq:Skyrmestress}
T_{ij} = L_{ij} - \frac12 L_k{}^kg_{ij} - \frac12\left( L_{ik}L_j{}^k
- \frac14 g_{ij}L_{kl}L^{kl} - L_{ij}L_k{}^k + \frac14
  g_{ij}(L_k{}^k)^2\right).
\ee
As $L_{ij}$ is the pull-back of twice the bi-invariant metric on
$SU(2)$ we may pick a basis at $x$, orthonormal with respect to $e_{ij}$,
such that $L_{ij}(x) =
\mathrm{diag}(\lambda_1,\lambda_2,\lambda_3)$, $\lambda_k$ being
non-negative real numbers. In this basis
$T_{ij}(x)$ is also diagonal, with 
\be
T_{(1)(1)}(x) = \frac12 (\lambda_1 - \lambda_2 - \lambda_3) +
\frac14(\lambda_1\lambda_2 + \lambda_1\lambda_3 -
\lambda_2\lambda_3)
\ee
and $T_{(2)(2)}(x), T_{(3)(3)}(x)$ given by cycling indices. 

\begin{prop}\label{skyrmeisdefinite}
The static Skyrme model is both even- and odd-definite.
\end{prop}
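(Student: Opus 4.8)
The plan is to reduce the whole statement to a single-point computation and then substitute into the diagonal form of the stress tensor obtained just above. First I would invoke the remark that, by the rotational invariance of $(\mathbb{R}^3,e)$, it is enough to verify even- and odd-definiteness for one convenient unit vector $W$ at $x$; this frees me to pick an $e$-orthonormal frame $\{f_{(i)}\}$ at $x$ that diagonalises $L_{ij}(x) = \mathrm{diag}(\lambda_1,\lambda_2,\lambda_3)$ with all $\lambda_k \geq 0$ --- possible precisely because $L_{ij}$ is the pull-back of a Riemannian metric on $SU(2)$, hence positive semidefinite --- and then set $W = f_{(1)}$. The observation that makes this choice legitimate is that the vanishing of $\nabla\Phi$ in a given direction is equivalent to the vanishing of the current $L$ in that direction: since $U$ is invertible, $\nabla_Y\Phi(x)=0 \iff Y^iL_i(x)=0$, whence $L_{ij}(x)Y^j = -\Tr(L_i(x)\,Y^jL_j(x)) = 0$. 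Hence in the even case the hypothesis $\nabla_W\Phi(x)=0$ forces $W\in\ker L_{ij}(x)$, so we may take $\lambda_1 = 0$; in the odd case the hypothesis $\nabla_Y\Phi(x)=0$ for all $Y\in Q$ forces the whole hyperplane $Q\subseteq\ker L_{ij}(x)$, and since a positive semidefinite symmetric form has its nonzero eigenspace $e$-orthogonal to its kernel, that eigenspace must lie in $Q^\perp = \mathrm{span}\{W\}$, so we may take $\lambda_2 = \lambda_3 = 0$ with $f_{(1)} = W$.

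With the frame so chosen I would simply substitute into the expression
\[ T(W,W)|_x = T_{(1)(1)}(x) = \frac12(\lambda_1 - \lambda_2 - \lambda_3) + \frac14(\lambda_1\lambda_2 + \lambda_1\lambda_3 - \lambda_2\lambda_3). \]
In the even case, $\lambda_1 = 0$ gives $T(W,W)|_x = -\frac12(\lambda_2 + \lambda_3) - \frac14\lambda_2\lambda_3 \leq 0$, so even-definiteness holds with $s_e = -1$; equality forces $\lambda_2 = \lambda_3 = 0$, hence $L_{ij}(x) = 0$ and therefore $\nabla\Phi(x) = 0$, which is what ``definite'' (rather than merely ``semidefinite'') demands. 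In the odd case, $\lambda_2 = \lambda_3 = 0$ gives $T(W,W)|_x = \frac12\lambda_1 \geq 0$, so odd-definiteness holds with $s_o = +1$; equality forces $\lambda_1 = 0$ and again $\nabla\Phi(x) = 0$. This would complete the proof.

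I do not expect any serious obstacle here: the entire content is the explicit formula \eqref{eq:Skyrmestress} for $T_{ij}$ together with the non-negativity of the $\lambda_k$. The one step that wants a line of care is the passage from the derivative hypotheses of the two definitions to the statements $W\in\ker L_{ij}(x)$, respectively $Q\subseteq\ker L_{ij}(x)$, and the attendant verification that the $e$-orthonormal diagonalising frame can be taken with $f_{(1)}=W$; this is exactly the ``$W$ is already an eigenvector of $L_{ij}(x)$'' remark above. I would also note in passing that adding a pion-mass potential $V(U)\geq 0$ contributes $-V(U)\,e_{ij}$ to $T_{ij}$, which only reinforces the sign $s_e = -1$, and which vanishes along $P$ for an odd solution (such a solution being pinned to the vacuum there); hence Lemma \ref{mainlemma}, and with it the non-existence conclusion, carries over to the massive static Skyrme model.
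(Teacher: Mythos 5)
Your proof is correct and follows essentially the same route as the paper: the key observation that $\nabla_X U(x)=0$ forces $X$ into the kernel of $L_{ij}(x)$, the choice of an orthonormal diagonalising frame with $f_{(1)}=W$, and direct substitution into the diagonal form of $T_{(1)(1)}$ are exactly the paper's argument. Your closing remark on the mass term likewise matches the paper's own observation immediately following the proposition.
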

\begin{proof}
Observe that if for some non-vanishing vector $X$, we have that
$\nabla_XU(x) = 0$, then $X^iL_i(x) = 0$ and hence $X^iL_{ij}(x) = 0$. This
implies that such an $X$ is an eigenvector of $L_{ij}(x)$. Now let $W$ be
a unit vector and $Q$ its orthogonal complement. Suppose $\nabla_WU(x) =
0$. Then we can complete $f_{(1)} = W$ to an orthonormal basis such that
$\lambda_1 = 0$. Then 
\[ T(W,W) = T_{(1)(1)} = -\frac12(\lambda_2+\lambda_3) -
\frac14(\lambda_2\lambda_3) \]
is manifestly non-positive, with $0$ attained only with $\lambda_2 =
\lambda_3 = 0$ (which implies that $\nabla U(x) = 0$). Similarly, if
$\nabla_YU(x) = 0$ for any $Y\in Q$, then $Q$ is in the null-space of
$L_{ij}$, and hence we can complete $f_{(1)} = W$ to an orthonormal
basis such that $\lambda_2 = \lambda_3 = 0$, and 
\[ T(W,W) = T_{(1)(1)} = \frac12 \lambda_1 \]
is non-negative, with $0$ attained only when $\nabla U(x) = 0$. 
\end{proof}

We note that the addition of a mass term of the form
\be
E_m = \int  m^2 \Tr (1-U) \sqrt{|e|} d^3 x
\ee
to the energy (\ref{energy}) does not affect the results of the above
proposition. In the case of an even reflection it contributes to the
stress tensor with the correct sign, while for an odd reflection it
has no contribution to the stress tensor. Physically one expects this
to be the case since the mass term affects the range of the forces,
but not their signs.

\begin{theorem}
A smooth solution to the static Skyrme model with sufficient decay
at infinity that is either even or odd about a hyperplane $P$ must be
the vacuum solution $U \equiv Id$. 
\end{theorem}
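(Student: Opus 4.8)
The plan is to assemble the pieces already in place: Proposition~\ref{skyrmeisdefinite} establishes that the static Skyrme model is both even- and odd-definite with the sign constants $s_e = -1$, $s_o = +1$, and Lemma~\ref{mainlemma} says that for an even (odd) solution of an even (odd)-definite theory satisfying the stated decay and compactification hypotheses, we get $\Phi|_P = q$ and $\nabla\Phi|_P = 0$ on the symmetry hyperplane. So the first step is simply to invoke these two results to conclude that any smooth even or odd solution $U$ of the static Skyrme model with sufficient decay restricts to the constant $q \in SU(2)$ on $P$ with vanishing gradient there, where by the normalization discussed after \eqref{eq:SU2action} we may take $q = Id$ (for the odd case $q$ must be a fixed point of $G$-parity, i.e.\ $q = -q$, and in $SU(2)$ the fixed points of the group inverse are $\pm Id$; the reduction to $Id$ is by the residual $SU(2)$ symmetry).

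The substantive step is then a unique-continuation argument: a solution of the second-order elliptic system \eqref{eq:ELE} that vanishes (together with its first derivatives) on a hyperplane $P$ — indeed is identically equal to the vacuum $Id$ there — must vanish identically. I would argue as follows. Write $U = \exp(\phi)$ with $\phi$ an $\mathfrak{su}(2)$-valued function near $P$; since $U|_P = Id$ and $\nabla U|_P = 0$ we have $\phi|_P = 0$ and $\nabla\phi|_P = 0$. The Euler--Lagrange equation \eqref{eq:ELE} becomes a quasilinear second-order system for $\phi$ whose leading part is the flat Laplacian (the quartic Skyrme term contributes terms quadratic in first derivatives times second derivatives, hence is a lower-order perturbation in the unique-continuation sense, and the mass term is a zeroth-order analytic perturbation). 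Both Cauchy data of $\phi$ vanish on $P$, so by the Cauchy--Kovalevskaya/Holmgren-type uniqueness for such systems — or, more robustly, by Carleman-estimate unique continuation across the non-characteristic hypersurface $P$ — $\phi \equiv 0$ in a neighborhood of $P$, hence $U \equiv Id$ near $P$. A standard connectedness argument (the set where $U \equiv Id$ with all derivatives vanishing is open and closed in $\mathbb{R}^3$) then propagates this to all of $\mathbb{R}^3$, giving $U \equiv Id$.

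The main obstacle is making the unique-continuation step rigorous for the quasilinear Skyrme system rather than a linear equation: one must verify that the hypersurface $P$ is non-characteristic for the principal symbol of \eqref{eq:ELE} linearized about $U \equiv Id$ (it is — the symbol is essentially $|\xi|^2$ times the identity on $\mathfrak{su}(2)$), and that the nonlinear terms satisfy the structural bounds needed for a Carleman estimate (they are at worst quadratic in the already-small quantities $\phi, \nabla\phi$). A cleaner route, if one is willing to trade generality for brevity, is to observe that smooth solutions of the Skyrme equations are real-analytic (the system is elliptic with analytic coefficients), so that Holmgren's theorem applies directly: an analytic solution with vanishing Cauchy data on the analytic non-characteristic hypersurface $P$ vanishes in a full neighborhood, and analyticity then forces $U \equiv Id$ globally on the connected manifold $\mathbb{R}^3$. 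I would present the analytic-continuation version as the main line and remark that a Carleman-estimate argument removes the need for analyticity if desired.
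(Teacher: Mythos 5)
Your first step coincides exactly with the paper's: combine Proposition~\ref{skyrmeisdefinite} with Lemma~\ref{mainlemma} to conclude that $U$ is constant on $P$ with vanishing gradient there. Where you diverge is in how the Cauchy data on $P$ are converted into global triviality. The paper first uses the Euler--Lagrange equation \eqref{eq:ELE} to bootstrap the first-order vanishing to vanishing of \emph{all} derivatives along $P$, so that $U - Id$ decays faster than any polynomial near $P$, and then invokes, as a black box, the strong unique continuation property of the static Skyrme system established in Theorems 2.4 and 2.5 of \cite{MantonSchroersSinger}. You instead attempt a self-contained Cauchy-uniqueness argument (Holmgren via analyticity of solutions, or a Carleman estimate across the non-characteristic hypersurface $P$). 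The paper's route buys rigor cheaply by outsourcing the hard analysis to a result tailored to the Skyrme system; your route would make the theorem self-contained but leaves the hardest step as a sketch.

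The gap you flag is real and is worth naming precisely: both the ``clean'' Holmgren route and the Carleman route rest on properties of the quasilinear system \eqref{eq:ELE} that hold near the vacuum but are not automatic globally. The principal symbol of the Skyrme operator is \emph{not} simply $|\xi|^2$ times the identity plus lower-order terms: the quartic term contributes $\nabla U$-dependent corrections to the principal part, and for large field gradients these can degenerate the symbol (this is precisely the ellipticity/hyperbolicity issue studied in \cite{Gibbons2,Wong}). Consequently the claim that smooth static solutions are real-analytic everywhere --- which your global analytic-continuation step requires --- needs ellipticity on all of $\mathbb{R}^3$, not just near $P$ where the field is close to $Id$. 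Local unique continuation across $P$ is plausible by your argument, but the propagation to all of $\mathbb{R}^3$ by openness-and-closedness then requires unique continuation at points where the field is far from the vacuum, and that is exactly the content of the cited theorems of \cite{MantonSchroersSinger}. So your proposal is structurally correct and honest about its weak point, but as written it does not close the argument; to finish it one should either verify ellipticity and analyticity of static Skyrme solutions globally, or do as the paper does and cite the strong unique continuation result directly.
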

\begin{proof}
Combining Proposition \ref{skyrmeisdefinite} and Lemma \ref{mainlemma}
we have that any solution satisfying the hypotheses of the theorem
must be constant along $P$ with vanishing normal derivative. By
inspection of the Euler-Lagrange equations \eqref{eq:ELE}, all
derivatives of any order of $U$ must vanish along $P$. By Taylor's
theorem with remainder, $U$ must converge to the identity element
faster than any polynomial in a neighborhood of $P$. Then by the
strong unique continuation property of the static Skyrme model (see
Theorems 2.4 and 2.5 in \cite{MantonSchroersSinger}) $U$ must be
constant. 
\end{proof}

We remark that the solution $\Phi$ vanishes to infinite order along
$P$ under the conclusion of Lemma \ref{mainlemma} is generically true
for field theories whose Euler-Lagrange equation is second order
elliptic, in view of the fact that the conclusion of the lemma forces
simultaneously a Dirichlet and a Neumann boundary condition. An 
analogous theorem can be then proven for any such system
provided that it has the unique continuation property. 

\subsection{Well-separated Skyrmions}

As is typically the case with topological solitons, the field
configurations believed to minimize the energy (\ref{energy}) with a
given charge $B$ have a compact `core' region in which the energy is
concentrated and an asymptotic region where the fields are close to
the vacuum. This is made precise in Theorem 2.1 of
\cite{MantonSchroersSinger} where it is shown that under fairly weak
assumptions on $U$, any solution of the Skyrme equations
(\ref{eq:ELE}) has a complete asymptotic expansion in powers of $1/r$
and $\log r$. In the case that $U$ is non-constant, the leading term
of this expansion is harmonic, hence a multipole. 

To discuss the asymptotics it is convenient to introduce the triplet
of pion fields $\pi_a$ in terms of which $U$ is written as
\be
U = \sigma + i \pi_a \tau_a
\ee
where $\tau_a$ are the Pauli matrices and $\sigma$ depends on the pion
fields through the constraint $\sigma^2 + \pi_a \pi_a = 1$. We will
focus our attention on the $B=1$ soliton, believed to be the
`hedgehog' configuration found by minimizing the energy within a
spherically symmetric class, see \cite{MantonSutcliffe}. For
this configuration, the asymptotic field is known to have the leading
order form
\be
\pi_a = \frac{C}{|x|^2} x_a\,, \qquad \sigma = 1\,.
\ee
This corresponds to a triplet of orthogonal pion dipoles. It can
additionally be shown that if a second $B=1$ Skyrmion is introduced in
the far field of the first, the interactions between them are at
leading order dominated by the dipole-dipole interactions in the pion
fields \cite{Manton}.

\begin{figure}[!ht]
\[
\begin{picture}(0,0)%
\includegraphics{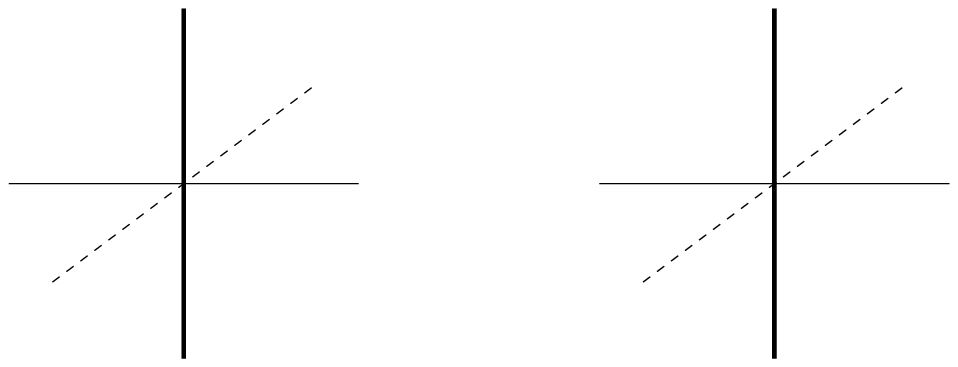}%
\end{picture}%
\setlength{\unitlength}{2763sp}%
\begingroup\makeatletter\ifx\SetFigFont\undefined%
\gdef\SetFigFont#1#2#3#4#5{%
  \reset@font\fontsize{#1}{#2pt}%
  \fontfamily{#3}\fontseries{#4}\fontshape{#5}%
  \selectfont}%
\fi\endgroup%
\begin{picture}(7101,2848)(2101,-5396)
\put(2446,-4816){\makebox(0,0)[lb]{\smash{{\SetFigFont{8}{9.6}{\rmdefault}{\mddefault}{\updefault}{\color[rgb]{0,0,0}$-$}%
}}}}
\put(4906,-3991){\makebox(0,0)[lb]{\smash{{\SetFigFont{8}{9.6}{\rmdefault}{\mddefault}{\updefault}{\color[rgb]{0,0,0}$+$}%
}}}}
\put(2101,-4006){\makebox(0,0)[lb]{\smash{{\SetFigFont{8}{9.6}{\rmdefault}{\mddefault}{\updefault}{\color[rgb]{0,0,0}$-$}%
}}}}
\put(4456,-3256){\makebox(0,0)[lb]{\smash{{\SetFigFont{8}{9.6}{\rmdefault}{\mddefault}{\updefault}{\color[rgb]{0,0,0}$+$}%
}}}}
\put(3511,-2656){\makebox(0,0)[lb]{\smash{{\SetFigFont{8}{9.6}{\rmdefault}{\mddefault}{\updefault}{\color[rgb]{0,0,0}$+$}%
}}}}
\put(3526,-5356){\makebox(0,0)[lb]{\smash{{\SetFigFont{8}{9.6}{\rmdefault}{\mddefault}{\updefault}{\color[rgb]{0,0,0}$-$}%
}}}}
\put(6496,-4816){\makebox(0,0)[lb]{\smash{{\SetFigFont{8}{9.6}{\rmdefault}{\mddefault}{\updefault}{\color[rgb]{0,0,0}$-$}%
}}}}
\put(6151,-4006){\makebox(0,0)[lb]{\smash{{\SetFigFont{8}{9.6}{\rmdefault}{\mddefault}{\updefault}{\color[rgb]{0,0,0}$+$}%
}}}}
\put(7561,-2656){\makebox(0,0)[lb]{\smash{{\SetFigFont{8}{9.6}{\rmdefault}{\mddefault}{\updefault}{\color[rgb]{0,0,0}$+$}%
}}}}
\put(8506,-3256){\makebox(0,0)[lb]{\smash{{\SetFigFont{8}{9.6}{\rmdefault}{\mddefault}{\updefault}{\color[rgb]{0,0,0}$+$}%
}}}}
\put(8956,-3991){\makebox(0,0)[lb]{\smash{{\SetFigFont{8}{9.6}{\rmdefault}{\mddefault}{\updefault}{\color[rgb]{0,0,0}$-$}%
}}}}
\put(7576,-5356){\makebox(0,0)[lb]{\smash{{\SetFigFont{8}{9.6}{\rmdefault}{\mddefault}{\updefault}{\color[rgb]{0,0,0}$-$}%
}}}}
\end{picture}%
\]
\caption{Even reflection -- Skyrmion--anti-Skyrmion pair}
\end{figure}
\begin{figure}[!ht]
\[
\begin{picture}(0,0)%
\includegraphics{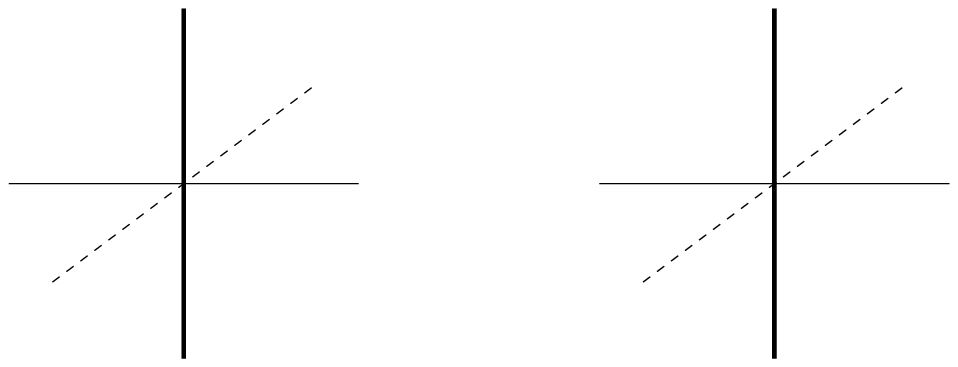}%
\end{picture}%
\setlength{\unitlength}{2763sp}%
\begingroup\makeatletter\ifx\SetFigFont\undefined%
\gdef\SetFigFont#1#2#3#4#5{%
  \reset@font\fontsize{#1}{#2pt}%
  \fontfamily{#3}\fontseries{#4}\fontshape{#5}%
  \selectfont}%
\fi\endgroup%
\begin{picture}(7147,2848)(2101,-5396)
\put(2446,-4816){\makebox(0,0)[lb]{\smash{{\SetFigFont{8}{9.6}{\rmdefault}{\mddefault}{\updefault}{\color[rgb]{0,0,0}$-$}%
}}}}
\put(4906,-3991){\makebox(0,0)[lb]{\smash{{\SetFigFont{8}{9.6}{\rmdefault}{\mddefault}{\updefault}{\color[rgb]{0,0,0}$+$}%
}}}}
\put(2101,-4006){\makebox(0,0)[lb]{\smash{{\SetFigFont{8}{9.6}{\rmdefault}{\mddefault}{\updefault}{\color[rgb]{0,0,0}$-$}%
}}}}
\put(4456,-3256){\makebox(0,0)[lb]{\smash{{\SetFigFont{8}{9.6}{\rmdefault}{\mddefault}{\updefault}{\color[rgb]{0,0,0}$+$}%
}}}}
\put(3511,-2656){\makebox(0,0)[lb]{\smash{{\SetFigFont{8}{9.6}{\rmdefault}{\mddefault}{\updefault}{\color[rgb]{0,0,0}$+$}%
}}}}
\put(3526,-5356){\makebox(0,0)[lb]{\smash{{\SetFigFont{8}{9.6}{\rmdefault}{\mddefault}{\updefault}{\color[rgb]{0,0,0}$-$}%
}}}}
\put(7576,-5356){\makebox(0,0)[lb]{\smash{{\SetFigFont{8}{9.6}{\rmdefault}{\mddefault}{\updefault}{\color[rgb]{0,0,0}$+$}%
}}}}
\put(8956,-3991){\makebox(0,0)[lb]{\smash{{\SetFigFont{8}{9.6}{\rmdefault}{\mddefault}{\updefault}{\color[rgb]{0,0,0}$+$}%
}}}}
\put(6496,-4816){\makebox(0,0)[lb]{\smash{{\SetFigFont{8}{9.6}{\rmdefault}{\mddefault}{\updefault}{\color[rgb]{0,0,0}$+$}%
}}}}
\put(6151,-4006){\makebox(0,0)[lb]{\smash{{\SetFigFont{8}{9.6}{\rmdefault}{\mddefault}{\updefault}{\color[rgb]{0,0,0}$-$}%
}}}}
\put(8506,-3256){\makebox(0,0)[lb]{\smash{{\SetFigFont{8}{9.6}{\rmdefault}{\mddefault}{\updefault}{\color[rgb]{0,0,0}$-$}%
}}}}
\put(7561,-2656){\makebox(0,0)[lb]{\smash{{\SetFigFont{8}{9.6}{\rmdefault}{\mddefault}{\updefault}{\color[rgb]{0,0,0}$-$}%
}}}}
\end{picture}%
\]
\caption{Odd reflection -- Skyrmion--Skyrmion pair}
\end{figure}

Let us apply this asymptotic analysis to the reflection symmetric
situation we consider above. There are two possibilities, shown in
Figures 1 and 2. We represent the dipoles in each of the three pion
fields with different styles of lines and take the reflection plane vertically. Since each reflection, whether
in real space or field space, reverses the baryon number we see that
the even reflection corresponds to a Skyrmion--anti-Skyrmion pair while
the odd reflection corresponds to a Skyrmion--Skyrmion pair.

Recalling that for scalar
fields, like the pions, like charges attract, we see that the
Skyrmion--anti-Skyrmion pair feel a mutual attraction while the
Skyrmion--Skyrmion pair feel a mutual repulsion. Thus neither may
remain in a static equilibrium. This argument will also to apply to
configurations with higher baryon number which may be thought of, in
the asymptotic field, as a composite of $B=1$ Skyrmions. This
heuristic argument provides a physical justification for the rigorous
results of the previous section which apply even when the field
configurations may not be approximated as consisting of two well
separated solitons.

\subsection{Other Symmetry Types}
In the preceding we have considered the two simplest cases of
$\mathbb{Z}_2$ action on the target manifold $\mathcal{N}$. There are,
of course, other candidates for an ``equivariant'' solution. In
general, we can classify the symmetries by the dimension of its
set of fixed points. 

The even symmetry we have discussed corresponds to the set of fixed
points having the same dimension as the target manifold. The identity
map on $\mathcal{N}$ is the only isometry with such property. The odd
symmetry, on the other hand, corresponds to the set of fixed points
being zero dimensional. For symmetric spaces as target manifolds,
reflections about totally geodesic submanifolds are examples of
$\mathbb{Z}_2$ actions with fixed point sets of intermediate numbers of
dimensions. 

In the case of the Skyrme model, the target manifold is $SU(2)$, which
has 3 real dimensions. We can consider actions of $\mathbb{Z}_2$ that
fixes a 1-dimensional subset (i.e.~a $U(1)$ subgroup) or a
2-dimensional subset (i.e.~an equatorial 2-sphere through the
identity) compatible with our decay assumption at infinity. Recall
that elements of $SU(2)$ can be presented as pairs
$(a,b)\in\mathbb{C}^2$ such that $a\bar{a} + b\bar{b} = 1$. Examples
of an action that fixes $U(1)$ include $(a,b) \to (\bar{a},\bar{b})$ 
and $(a,b) \to (a,-b)$, all of which are in fact
conjugate in $SU(2)$. An example of an action that fixes a 2-sphere is
$(a,b) \to (\bar{a},b)$. In any of these cases, since the plane of
symmetry $P\subset \mathbb{R}^3$ is allowed to be mapped into a
non-trivial subset of $\mathcal{N}$, the transverse stress $T(n,n)$ is
no longer guaranteed to have a sign, and the arguments given in Lemma
\ref{mainlemma} no longer apply. In fact, as was shown
numerically by Krusch and Sutcliffe \cite{KruschSutcliffe}, an ansatz
corresponding to $(a,b) \to (a,-b)$ can lead to nontrivial 
Sphaleron solutions in the Skyrme model. In fact, Sphaleron solutions
with chains of Skyrmion--anti-Skyrmion pairs have been shown to exist \cite{Shnir:2009ct}.

 \subsection{Self-Gravitating Skyrmions} 
   
Since the pioneering work of Luckock  and Moss \cite{Luckock}
there have been many studies in which the Skyrme model
is coupled to general relativity. Static solutions
may or may not contain  black holes. Since gravity is always attractive
one can say little in general about whether Skyrmion-Skyrmion
equilibria  are possible. If they are, they are presumably unstable.
However one can say, using the methods of
\cite{BeigGibbonsSchoen} and the results of this paper
that  Skyrmion--anti-Skyrmion equilibria, with a reflection symmetry
whether stable or unstable
are not possible.

\section{Conclusion}

In this paper we have  considered classical static Skyrmion--anti-Skyrmion
and Skyrmion-Skyrmion configurations which  are symmetric
with respect to plane reflection,
or symmetric  up to a $G$-parity transformation respectively.
\bea
U(x,y,z) &=& U(x,y,-z) \,, \label{anti}\\
U(x,y,z) &=& U^{-1} (x,y,-z) \,. \label{skyrme}
\eea
By calculating the total force acting
on the plane $P = \{z=0\}$
\ben
F = \int _{P} T_{zz} \,dx dy
\een
we have shown that $F$ is always an attractive force
in the first case and a repulsive force in the second  case.
It is striking that we did not need to make any further symmetry
assumptions and so our results are rather general
and cover the case of multi-Skyrmion configurations.
Our reflection assumption (\ref{anti}) differs from
that used for Sphaleron configurations \cite{KruschSutcliffe}
which acts as
\ben
U(x,y,z)= e^{i \frac{\scriptstyle \pi}{2} \tau _3 }
U(x,y,-z) e^{-i \frac{\scriptstyle \pi}{2} \tau _3 } \,,
\een
and is consistent  with an analysis of
Skrymion forces at large separation \cite{Manton}.
We have noted that it  is straightforward
to incorporate the effects of gravity along the lines of
\cite{BeigGibbonsSchoen}.

Our result is not restricted to the standard Skyrme model,
higher powers of derivatives  could be added to the action and the
essential  reflection even-ness or reflection odd-ness
property of the stress tensor would still hold,
as would the dominant energy condition \cite{Wong}.
One may certainly add a mass term.
The basic ideas should also extend to other groups and other target
spaces,
for example the physically relevant case of  $SU(3)$.

As we remarked briefly in the introduction,
the fact that solitons attract anti-solitons
but repel solitons seems to be very general and we
expect similar results may hold for Yang-Mills monopoles
and possibly for Q-balls, although the latter, being
time-dependent, will certainly bring in new features.

Finally we remark that there is a tantalizing
analogy between our setup, involving as it does
a matter--anti-matter reflection plane  and the interaction energy
of a source and its mirror image, and the setup used
in Euclidean Quantum Field Theory when considering
Reflection Positivity \cite{Jaffe:2007kd}, a connection made originally in \cite{Bachas}.
We have nothing much to say further, other than to remark
that our static  soliton configurations of a theory in
$3+1$ dimensions may of course, by analytic continuation be
regarded as instantons
of a theory in $2+1$ dimensions. While our results
allow us to make a statements about the sign of such forces, the
missing
link is to establish a precise connection between the interaction
energies of instantons and the quantum mechanical inner product
on the  states in Hilbert space associated with the instantons.
Such a connection might be especially illuminating
in view of the fact that, as remarked above,  the extension to include
gravity is immediate and there are already some cases where
reflection positivity has been applied to Euclidean Quantum Gravity
\cite{GibbonsPohle, Jaffe:2006uz, Jaffe:2007uy}.

\section{Acknowledgements}
We would like to thank Costas Bachas and Nick Manton for helpful
comments. W.W.Wong is supported by the Commission of the European Communities, ERC
Grant Agreement No 208007.

\end{document}